\newtheorem{lemma}{Lemma}
\begin{document}

% Use the \preprint command to place your local institutional report
% number in the upper righthand corner of the title page in preprint mode.
% Multiple \preprint commands are allowed.
% Use the 'preprintnumbers' class option to override journal defaults
% to display numbers if necessary
%\preprint{}

%Title of paper
\title{Applicability of Kerker preconditioning scheme to the self-consistent density functional theory calculations of inhomogeneous systems}

% repeat the \author .. \affiliation  etc. as needed
% \email, \thanks, \homepage, \altaffiliation all apply to the current
% author. Explanatory text should go in the []'s, actual e-mail
% address or url should go in the {}'s for \email and \homepage.
% Please use the appropriate macro foreach each type of information

% \affiliation command applies to all authors since the last
% \affiliation command. The \affiliation command should follow the
% other information
% \affiliation can be followed by \email, \homepage, \thanks as well.
\author{Yuzhi Zhou}
\affiliation
{Laboratory of Computational Physics, Huayuan Road 6, Beijing 100088, People's
Republic of China}
\affiliation
{Institute of Applied Physics and Computational Mathematics, Fenghao East Road
2, Beijing 100094, People's Republic of China}
\affiliation
{CAEP Software Center for High Performance Numerical Simulation, Huayuan Road
6, Beijing 100088, People's Republic of China}

\author{Han Wang}
\affiliation
{Institute of Applied Physics and Computational Mathematics, Fenghao East Road
2, Beijing 100094, People's Republic of China}
\affiliation
{CAEP Software Center for High Performance Numerical Simulation, Huayuan Road
6, Beijing 100088, People's Republic of China}

\author{Yu Liu}
\affiliation
{Institute of Applied Physics and Computational Mathematics, Fenghao East Road
2, Beijing 100094, People's Republic of China}
\affiliation
{CAEP Software Center for High Performance Numerical Simulation, Huayuan Road
6, Beijing 100088, People's Republic of China}

\author{Xingyu Gao}
\affiliation
{Laboratory of Computational Physics, Huayuan Road 6, Beijing 100088, People's
Republic of China}
\affiliation
{Institute of Applied Physics and Computational Mathematics, Fenghao East Road
2, Beijing 100094, People's Republic of China}
\affiliation
{CAEP Software Center for High Performance Numerical Simulation, Huayuan Road
6, Beijing 100088, People's Republic of China}

\author{Haifeng Song}
\affiliation
{Institute of Applied Physics and Computational Mathematics, Fenghao East Road
2, Beijing 100094, People's Republic of China}
\affiliation
{CAEP Software Center for High Performance Numerical Simulation, Huayuan Road
6, Beijing 100088, People's Republic of China}

\email[Send correspondence to:]{song_haifeng@iapcm.ac.cn, gao_xingyu@iapcm.ac.cn}

\date{\today}

\begin{abstract}
Kerker preconditioner, based on the dielectric function of homogeneous
electron gas, is designed to accelerate the self-consistent field (SCF)
iteration in the density functional theory (DFT) calculations. However,
question still remains regarding its applicability to the inhomogeneous
systems. In this paper, we develop a modified Kerker preconditioning scheme
which captures the long-range screening behavior of inhomogeneous systems thus
improve the SCF convergence. The effectiveness and efficiency is shown
by the tests on long-\emph{z} slabs of metals, insulators and metal-insulator
contacts. For situations without \emph{a priori} knowledge of the system,
we design the \emph{a posteriori} indicator to monitor if
the preconditioner has suppressed charge sloshing during the iterations. Based on the \emph{a posteriori} indicator, we demonstrate two schemes of the self-adaptive configuration for the SCF iteration.
\end{abstract}

% insert suggested PACS numbers in braces on next line
\pacs{71.15.-m}

%\maketitle must follow title, authors, abstract, \pacs, and \keywords
\maketitle
\section{I. Introduction}
Over the past few decades, the Kohn-Sham density functional theory (DFT)
calculation \cite{HohenbergKohn1964,KohnSham1965} has evolved into one of the
most popular \emph{ab initio} approaches for predicting the electronic
structures and related properties of matters. The computational kernel of the
Kohn-Sham DFT calculation is to solve a tangible nonlinear eigenvalue
problem, replacing the original difficult many-body problem
\cite{KohnSham1965}. The Kohn-Sham equation is usually solved by the
self-consistent field (SCF) iteration, which is proved quite reliable and efficient in most cases \cite{KresseFurthmuller1996}. However, the well known "charge sloshing" problem is likely to occur in the SCF iterations as the dimension of the system gets large. The charge sloshing generally refers to the long-wavelength oscillations of the output charge density due to some small changes in the input density during the iterations, and results in a slow convergence or even divergence \cite{Kresse1996CMS,KresseFurthmuller1996,Annett1995}. In some cases it might be referred to the oscillation between different local states of the d or f electrons \cite{MarksLuke2008}. In this work, we concentrate on the former situation.

Given a fixed number of total atoms, charge sloshing and poor SCF
convergence are more prominent and exacerbated in the long-\emph{z} slab
systems, where one dimension of the unit cell is much longer than the other
two's. On the other hand, investigating the properties of the surface and
the interface using DFT calculations has become one important subject in many
scientific and technological fields, such as solid state physics,
semiconductor processing, corrosion, and heterogeneous catalysis
\cite{Hasnip2013,Norskov2011}. The surface/interface is generally simulated by slab model with periodic boundary condition. When one comes to
distinguish the properties between the bulk and the surface, a rather thick
slab is needed to fully restore region with bulk-like properties. One example
is the calculation of the band offsets and valence band alignment at the
semiconductor heterojunctions \cite{Walle1986}. To get a quantitatively
accurate band offset, the lattice of both semiconductors must be extended far
away from the contact region. A similar example is the calculation of work function. Generally speaking, a  thick slab calculation requires relative high accuracy and is very likely to encounter charge sloshing. Effective and efficient mixing schemes are therefore needed to speed up the convergence in the surface/interface calculations.

Practical mixing schemes in the modern DFT code generally takes into account
two aspects: one is combining the results from previous iterations to build
the input for the next step; the other is reflecting the dielectric response of
the system (better known as "preconditioning"). On the first aspect, Pulay
and Broyden-like schemes are well established and widely used
\cite{Pulay1980,Broyden1965}. On the second aspect, Kerker in 1981 proposed
that charge mixing could be preconditioned by a diagonal matrix in the
reciprocal space. This matrix takes the form of inverse dielectric matrix
derived from Thomas-Fermi model of homogeneous electron gas \cite{Kerker1981}.
As pointed out in some literature
\cite{WangLinWang2001,Louie1984,KresseFurthmuller1996}, the preconditioning
matrix should be an approximation to the dielectric function of the system. In
this sense, Kerker preconditioner is ideal for simple metals such as Na
and Al whose valence electrons can be approximated by the homogeneous electron gas. Moreover, for most metallic systems, Kerker preconditioner is a suitable preconditioner since it describes the dielectric responses at the long wavelength limit fairly well.

A natural question is then raised: can Kerker preconditioner be applied to the insulating systems or the inhomogeneous systems such as metal-insulator contact? Efforts have been made to develop effective preconditioning schemes to
accommodate related issues. Kresse \emph{et al.} suggest adding a lower
bound to Kerker preconditioner for the calculation of large insulating systems \cite{KresseFurthmuller1996}. Similarly, Gonze \emph{et al.} realize it with a smoother preconditioning function \cite{abinit2009}. Raczkowski \emph{et al.} solve the Thomas Fermi von Weizs\"{a}cker equation to directly compute the optimized mixing density, in which process the full dielectric function is implicitly solved \cite{WangLinWang2001}. Ho \emph{et al.} \cite{Ho1982}, Sawamura \emph{et al.} \cite{Sawamura2004} and Anglade \emph{et al.} \cite{Anglade2008} adopt preconditioning schemes in which the exact dielectric matrix is computed by the calculated Kohn-Sham orbitals. Shiihara \emph{et al.} recast the Kerker preconditioning scheme in the real space \cite{Shiihara2008}. Lin and Yang further proposed an elliptic preconditioner in the real space method to better accommodates the SCF calculations of large inhomogeneous systems \cite{Lin2013}.

The major part of the above works relies on solving the realistic dielectric response either explicitly or implicitly. However, the extra computational overhead cannot be negligible for large-scale systems.
In practice, the computational expense to achieve the SCF convergence is
more of the concern and a good preconditioner does not necessarily
mean solving the dielectric function as accurately as possible. In this paper,
we focus on extending the applicability of Kerker preconditioning model,
which is based on the simple form of
Thomas-Fermi screening model. This is achieved by modifying Kerker preconditioner
to better capture the long-range screening behavior of the
inhomogeneous systems. For perfect insulating system, we introduce a threshold
parameter to represent the incomplete screening behavior at the long range. With the threshold parameter being set based on the static dielectric constant of the system, the SCF convergence can be reached efficiently and is independent of the system size. For metal-insulator
hybrid systems, the idea of the "effective" conducting electrons is introduced
to approximate the module of the Thomas-Fermi wave vector in the original Kerker
preconditioner. By estimating this module \emph{a priori}, we can
achieve the SCF convergence within 30 iterations in the calculations of
Au-MoS$_2$ slabs with a thickness of 160 {\AA}, saving about 40\% of the SCF iteration steps compared
to the original Kerker scheme. When one does not have sufficient knowledge of the
systems, we design an \emph{a posteriori} indicator to monitor if the charge
sloshing has been suppressed and to guide appropriate parameter setting. Based
on the \emph{a posteriori} indicator, we further present two schemes of self-adaptive
configuration of the SCF iterations. The implementation of our approach requires only small modifications on the original Kerker scheme and the extra computational overhead is negligible.

This paper is organized as follows: In Section II, we will reformulate the Pulay mixing scheme to show the physical meaning of the preconditioner in solving the fixed point equation. In Section III, we will revisit the Thomas-Fermi and Resta screening models to extend the Kerker preconditioner to non-metallic systems. In Section IV, the effectiveness and efficiency of our approach will be examined by numerical examples. Further discussions on this preconditioning technique and the introduction of \emph{a posteriori} indicator and self-adaptive configuration schemes will be given in Section V. Concluding remarks will be presented in the last section.

\section{II. Mathematical framework}
\subsection{A. Simple mixing and preconditioning}
Finding the solution of the Kohn-Sham equation where the output density $n^{\mathrm{out}}({\bf r})$ is equal to the input density $n^{\mathrm{in}}({\bf r})$ can be generalized to the following fixed point equation:
\begin{equation}
    {\bf F}({\bf x}) = {\bf x},
    \label{eq:fixed-point-problem}
\end{equation}
where ${\bf x}$ denotes a vector in many dimensions, e.g. the density is expanded in the dimensions of a set of plane waves. This becomes a minimization problem for the norm of the residual which is defined as
\begin{equation}
    {\bf R}({\bf x}) \equiv {\bf F}({\bf x}) - {\bf x}.
    \label{eq:residual}
\end{equation}
The simplest method for seeking the solution of Eq.~\eqref{eq:fixed-point-problem} is the fixed point iteration:
\begin{equation}
    {\bf x}_{m+1} = {\bf F}({\bf x}_m).
    \label{eq:fixed-point-iteration}
\end{equation}
In the region where ${\bf F}$ is a linear function of ${\bf x}$ and assuming ${\bf x}^*$ is the solution of Eq.~\eqref{eq:fixed-point-problem}, we have
\[
{\bf x}_{m+1}-{\bf x}^* = \left(\frac{\delta {\bf F}}{\delta {\bf x}}\right)^m({\bf x}_1-{\bf x}^*).
\]
Therefore, a necessary condition that guarantees the convergence of the fixed point iteration is
\[
\sigma\left(\frac{\delta {\bf F}}{\delta {\bf x}}\right) < 1,
\]
where $\sigma(A)$ is the spectral radius of the operator or matrix $A$. Unfortunately, in the Kohn-Sham equations, the above condition is generally not satisfied \cite{Annett1995}.

However, the simple mixing can reach convergence as long as
$\displaystyle\sigma\left(\frac{\delta {\bf F}}{\delta {\bf x}}\right)$ is bounded. The simple mixing scheme takes the form:
\begin{equation}
    {\bf x}_{m+1} = {\bf x}_m + P{\bf R}({\bf x}_m),
    \label{eq:simple-mixing}
\end{equation}
where $P$ is the matrix whose size is equal to the number of basis functions. We define the Jacobian matrix:
\begin{equation}
    J \equiv -\frac{\delta {\bf R}}{\delta {\bf x}}
    = I - \frac{\delta {\bf F}}{\delta {\bf x}},
    \label{eq:Jacobian-matrix}
\end{equation}
and denote its value at ${\bf x}^*$ by $J_*$. When ${\bf x}_m$ are sufficiently close to ${\bf x}^*$, the residual propagation of simple mixing Eq.~\eqref{eq:simple-mixing} is given by:
\begin{equation}
    {\bf R}({\bf x}_{m+1}) \approx \left(I-J_*P\right){\bf R}({\bf x}_m).
    \label{eq:error-propagation}
\end{equation}
In some literature \cite{Annett1995,KresseFurthmuller1996,Lin2013}, $P$ is $\alpha I$ with $\alpha$ being a scalar parameter. Then it follows from Eq.~\eqref{eq:error-propagation} that the simple mixing will lead to convergence if:
\begin{equation}
    \sigma\left(I-\alpha J_*\right) < 1.
    \label{eq:conv-cond-1}
\end{equation}
If $\lambda(J_*)$ is an eigenvalue of $J_*$, then the inequality Eq.~\eqref{eq:conv-cond-1} indicates that:
\begin{equation}
    \|1-\alpha\lambda(J_*)\| < 1.
    \label{eq:conv-cond-2}
\end{equation}
Note that $\lambda(J_*)>0$ is referred to as the stability condition of the material in \cite{Bauernschmitt1996JCP}. And it holds in most cases according to the analysis given in \cite{Annett1995}.
Consequently, Eq.~\eqref{eq:conv-cond-2} implies that:
\begin{equation}
    0 < \alpha < \frac{2}{\lambda(J_*)}.
    \label{eq:conv-cond-3}
\end{equation}
When $\lambda(J_*)$ is bounded, it is always possible to find a parameter $\alpha$ to ensure the convergence of the simple mixing scheme. Nevertheless, $\lambda(J_*)$ can become very large in practice, especially in the case of large scale metallic systems, which makes the convergence of the simple mixing extremely slow. Therefore it is desirable to construct effective preconditioning matrix $P$ in Eq.~\eqref{eq:simple-mixing} to speed up the convergence.

Firstly we will show that in the context of the charge mixing, the Jacobian matrix $J$ is just the charge dielectric response function, which describes the charge response to an external charge perturbation. Replacing the ${\bf x}_m$ in Eq.~\eqref{eq:simple-mixing} with charge density ${\bf n}_m$ yields
\begin{equation}
    {\bf n}_{m+1} = {\bf n}_{m} + P \cdot {\bf R}({\bf n}_m).
    \label{eq:charge-mixing}
\end{equation}
For ${\bf R}({\bf n}_m)$, we could expand it near ${\bf n}$ to the linear order
\begin{equation}
{\bf R}({\bf n}) = {\bf R}({\bf n}_m) - J \cdot ({\bf n} - {\bf n}_m),
\label{eq:residual-expand}
\end{equation}
where $J$ in the above equation is just the Jacobian matrix defined earlier in Eq.~\eqref{eq:Jacobian-matrix}. We always want to achieve as much self-consistency as possible in the next step, such that ${\bf R}({\bf n}_{m+1}) \approx 0$. Plugging this into Eq.~\eqref{eq:residual-expand}, we have
\begin{equation}
{\bf n}_{m+1} = {\bf n}_{m} + J^{-1} \cdot {\bf R}({\bf n}_m).
\label{eq:reduced-charge-mixing}
\end{equation}
Comparing Eq.~\eqref{eq:reduced-charge-mixing} with Eq.~\eqref{eq:charge-mixing}, we see that $P = J^{-1}$. The problem then becomes finding a good approximation of the Jacobian matrix $J$. To show that $J$ has the physical meaning of charge dielectric function, we follow Vanderbilt and Louie's procedure in Ref.~\cite{Louie1984}
\begin{equation}
{\bf V}_{m+1} \approx {\bf V}_{m} + U \cdot ({\bf n}_{m+1} - {\bf n}_{m}),
\label{eq:pot-input}
\end{equation}
where the matrix $U$ describes the change in the potential ${\bf V}$ due to a change in the charge density ${\bf n}$. As a result, the output charge density is given by
\begin{equation}
{\bf n}^{\mathrm{out}}_{m+1} \approx {\bf n}^{\mathrm{out}}_{m} + \chi \cdot ({\bf V}_{m+1} - {\bf V}_{m}),
\label{eq:output-pot}
\end{equation}
where $\chi$ is just the electric susceptibility matrix, describing the change in the output charge density due to a change in the potential. Combining Eqs.~\eqref{eq:charge-mixing}, \eqref{eq:residual-expand}, \eqref{eq:pot-input} and \eqref{eq:output-pot} yields
\begin{equation}
J = I - \chi \cdot U.
\label{eq:dielec-general}
\end{equation}
$J$ is often called as the dielectric matrix. According to Vanderbilt and Louie \cite{Louie1984}, $J^{-1}$ is the \emph{charge} dielectric response function which describes the fluctuation in the total charge due to a perturbation from external charge. Adopting the potential mixing, we can also obtain a dielectric response function $(I - U \cdot \chi)^{-1}$ which describes the potential response to an external potential perturbation. Note that the order of the matrix product matters and generally the \emph{charge} dielectric response function and the \emph{potential} dielectric response function are different but closely related.

\subsection{B. Pulay mixing scheme}
Instead of using vector ${\bf x}_m$ only from last step in Eq.~\eqref{eq:simple-mixing}, we can minimize the norm of the residual $\displaystyle ||{\bf R}({\bf x})||$ using the best possible combination of the ${\bf x}_m$ from all previous steps. This is the idea behind the technique called Direct Inversion in the Iterative Subspace (DIIS). It is originally developed by Pulay to accelerate the Hartree-Fock calculation \cite{Pulay1980}. Hence it is often referred to as Pulay mixing in the condensed matter physics community.

An alternative way to derive Pulay method is taking it as the special case of the Broyden's method \cite{Johnson1988PRB}. In the Broyden's second method, a sequence of low-rank modifications are made to modify initial guess of the inverse Jacobian matrix in Eq.~\eqref{eq:Jacobian-matrix} near the solution of Eq.~\eqref{eq:fixed-point-problem}. The recursive formula \cite{MarksLuke2008,Fang2009NLAP} can be derived from the following constrained optimization problem:
\begin{equation}
    \left\{
    \begin{array}{rl}
        \min_H & \frac{1}{2}\|H-H_{m-1}\|_F^2 \\
        \textrm{s.t.} & HY_{m-1} = -S_{m-1},
    \end{array}
    \right.%
    \label{eq:opt-3}
\end{equation}
where $H_{m-1}$ is the approximation to the inverse Jacobian in the $(m-1)$th Broyden update, $S_{m-1}$ and $Y_{m-1}$ are respectively defined as:
\begin{eqnarray}
    S_{m-1} &=& \left(\delta{\bf x}_{m-1}, \cdots, \delta{\bf
    x}_{m-l+1}\right), \nonumber\\
    Y_{m-1} &=& \left(\delta{\bf R}_{m-1}, \cdots, \delta{\bf
    R}_{m-l+1}\right).
    \label{eq:SY}
\end{eqnarray}
It will be later proved in the appendix that the solution to Eq.~\eqref{eq:opt-3} is:
\begin{equation}
    H_{m} = H_{m-1} -
    \left(S_{m-1}+H_{m-1}Y_{m-1}\right)\left(Y_{m-1}^TY_{m-1}\right)^{-1}Y_{m-1}^T.
    \label{eq:Broyden-update}
\end{equation}
We arrive at Pulay mixing scheme by fixing the $H_{m-1}$ in Eq.~\eqref{eq:Broyden-update} to the initial guess $H_1$ of the inverse Jacobian:
\begin{equation}
    H_{m} = H_{1} -
    \left(S_{m-1}+H_{1}Y_{m-1}\right)\left(Y_{m-1}^TY_{m-1}\right)^{-1}Y_{m-1}^T.
    \label{eq:Pulay-update2}
\end{equation}
Then one can follow the quasi Newton approach to generate the next vector:
\begin{eqnarray}
    {\bf x}_{m+1} &=& {\bf x}_m + H_m{\bf R}({\bf x}_m)
    \\\label{eq:quasi-Newton}
    &=& {\bf x}_m + H_1{\bf R}({\bf
    x}_m)-\left(S_{m-1}+H_1Y_{m-1}\right)\left(Y_{m-1}^TY_{m-1}\right)^{-1}Y_{m-1}^T{\bf
    R}({\bf x}_m).
    \label{eq:Broyden-update-matrix}
\end{eqnarray}

We comment that the construction of $H_1$ in Eq.~\eqref{eq:Broyden-update-matrix} is crucial for accelerating the convergence
and is equivalent to the preconditioner for the simple mixing in Eq.~\eqref{eq:simple-mixing}. It is implied by Eq.~\eqref{eq:conv-cond-1} that
preconditioning would be effective if $H_1$ is a good guess of the inverse
dielectric matrix near the solution of Eq.~\eqref{eq:fixed-point-problem}. In
this paper, we concentrate on the Kerker based preconditioning models and appropriate parameterization schemes to capture the long-range dielectric behavior, which turn out to be crucial in improving the SCF convergence.

\section{III. Preconditioning model}
\subsection{A. Thomas-Fermi screening model}
The Thomas-Fermi screening model is the foundation for the Kerker preconditioner. The Thomas-Fermi screening model gives the dielectric response function of the homogeneous electron gas. The dielectric function in the reciprocal space can be expressed as:
\begin{equation}
\varepsilon({\bf q}) = 1 +  \frac{k_{TF}^2}{{\bf q}^2},
\label{eq:dielec-TF2}
\end{equation}
where the Thomas-Fermi vector $k_{TF}$ is given by
\begin{equation}
k_{TF}^2 = 4 \pi e^2 \frac{\delta N}{\delta \mu}.
\end{equation}
The electron number density $N$ is related to the chemical potential $\mu$ through the Fermi-Dirac distribution and dispersion relation of the free electron gas
\begin{equation}
N(\mu) = \int \frac{d {\bf k}}{4 \pi^3} \frac{1}{\exp[\beta(\frac{\hbar^2 {\bf k}^2}{2m_{e}}- \mu)] + 1}.
\label{eq:Fermi-Dirac}
\end{equation}
Now we can derive Kerker preconditioner \cite{Kresse1996CMS,Kerker1981,Lin2013} by inverting the dielectric matrix \footnote{Strictly speaking, the $\varepsilon$ derived here is respect to the potential. However, under the condition of homogeneous system, the potential dielectric response function and charge dielectric response function are same. This is because the $\chi$ in Eq.\eqref{eq:dielec-general} becomes diagonalized thus the product of $\chi \cdot U$ equals to $U \cdot \chi$.}
\begin{equation}
H_{1}^{TF}({\bf q}) = \frac{{\bf q}^2}{{\bf q}^2 + k_{TF}^2}.
\label{eq:original-kerker}
\end{equation}
There are some remarks on the Thomas-Fermi screening model with its implication to Kerker preconditioner and SCF calculations:\\
(i) It can be seen from Eq.~\eqref{eq:dielec-TF2} that the dielectric function diverges quadratically at small $\bf{q}$, which is the mathematical root of the charge sloshing. If a metallic system contains small ${\bf q}$'s, the change in the input charge density will be magnified by the divergence at long wavelength in the dielectric function. This results in large and long-range oscillations in the output charge density, known as the "charge sloshing". Such issue is more prominent in the long-\emph{z} metallic slab systems in which one dimension of the cell is much larger than the rest two. Therefore we use the slab systems for numerical tests.\\
(ii) It is reasonable to ignore the contribution of the exchange-correlation potential in the derivation. In the long wavelength limit, the $1/{\bf q}^2$ divergence at small $\bf q$ is caused by the Coulomb potential while the exchange-correlation potential is local in nature. In this sense, the Thomas-Fermi screening model correctly describe the dielectric behavior of metals at long wavelength, which makes the Kerker preconditioner appropriate for most typical metallic systems.\\
(iii) Even though the dielectric function in Eq.~\eqref{eq:dielec-TF2}
is mounted on the homogeneous electron gas, it still manifests an important feature of the electron screening in the common metallic systems. As mentioned above, $\frac{\delta N}{\delta \mu}$ has the physical meaning of the number of the states in the vicinity of (below and above) the Fermi level. Only these electrons can actively involve in screening since they can adjust themselves to higher unoccupied states to accommodate the change in the potential. Deeper electrons are limited by the high excitation energy due to Pauli exclusion principle. This observation is somehow independent of the band structures of the system. \\
(iv) Following the above point, we further estimate the parameter $k_{TF}$ under the assumption of homogeneous electron gas. Since $\frac{\delta N}{\delta \mu}$ can be approximated by the number of states at the Fermi level, we can write $k_{TF}$ as
\begin{equation}
k_{TF}^2 \approx 4 \pi e^2 N(\varepsilon_{F}) =  \frac{4(3 \pi^2 n_0)^{1/3}}{a_B \pi},
\end{equation}
and
\begin{equation}
a_B = \hbar^2/(me^2) \approx 0.53 \ {\rm \AA},
\end{equation}
where $a_B$ is the Bohr radius and $n_0$ is the total free electron density in the system.
Plugging in numbers, we have the following relation:
\begin{equation}
k_{TF} \approx 2(\frac{n_0}{a_B^3})^{1/6}.
\label{eq:kTF_n}
\end{equation}
In a typical metal, $n_0 \approx 10^{23}$ cm$^{-3}$. Therefore, $k_{TF}
\approx 1 $ \AA$^{-1}$. This is also the default value for Kerker
preconditioner in many simulation packages. As shown later,
Eq.~\eqref{eq:kTF_n} could help us with parameterizing the $k_{TF}$ and facilitate the convergence of the metal-insulator hybrid systems.

\subsection{B. Resta screening model}
The Thomas-Fermi screening model is more appropriate in describing the screening effect in the metallic system. Resta considered the boundary condition of the electrostatic potential for insulators and derived the corresponding screening model \cite{Resta1977}. Rather than the complete screening in the metallic system, the potential is only partially screened beyond some screening length in the insulators. This is characterized by the static dielectric constant $\varepsilon(0)$
\begin{equation}
V({\bf r}) = - \frac{Z}{\varepsilon(0) r}, \quad r \geq R_s,
\end{equation}
where $R_s$ is the screening length and is generally on the order of the lattice constants. According to Resta, the relation between the screening length and the static dielectric constant is given by
\begin{equation}
\varepsilon(0) = \frac{\sinh (q_{0} R_{s})}{q_{0} R_{s}},
\label{eq:Resta-Constr}
\end{equation}
where $q_0$ is a constant related to the valence electron Fermi momentum
$k_{F}$ through
\begin{equation}
q_0 = (4 k_{F}/ \pi)^{1/2}.
\end{equation}
$k_{F}$ is determined by the average valence electron density $n_0$
\begin{equation}
k_{F} = (3 \pi n_0)^{1/3}.
\end{equation}
Under the atomic unit, $q_0$ is in the unit of inverse distance. The dielectric function can be written as follow
\begin{equation}
\varepsilon({\bf q}) = \frac{q_0^2 + {\bf q}^2}{\frac{q_0^2 \sin(|{\bf q}|R_s)}{\varepsilon(0) |{\bf q}|R_s} + {\bf q}^2}.
\label{eq:Resta-dielec}
\end{equation}
The three material parameters $q_0$, $R_s$ and $\varepsilon(0)$ in the above equation are related by Eq.~\eqref{eq:Resta-Constr} thus only two are needed for the input. The static dielectric constant $\varepsilon(0)$ and Fermi momentum related quantity $q_0$ can be extracted from the experimental data. In Resta's original paper, he offered the input parameters for Diamond, Silicon and Germanium. He further showed that the calculated dielectric functions for these materials are in close agreement with those derived from Penn-model results of Srinivasan \cite{Penn1962,Sri1969} and RPA calculations of Walter and Cohen \cite{Cohen1970}. However, the dielectric function he proposed is much simpler in the expression compared with others. Later on, Shajan and Mahadevan \cite{Mahadevan1992} used Resta's model to calculate the dielectric function of many binary semiconductors, such as GaAs, InP, ZnS, etc. Their results are found to be in excellent agreement with those calculated by the empirical pseudopotential method \cite{Richard1971}.

Here, we proposed the Resta's preconditioner by inverting Eq.~\eqref{eq:Resta-dielec}
\begin{equation}
H_{1}^{Res}({\bf q}) = \frac{\frac{q_0^2 \sin(|{\bf q}|R_s)}{\varepsilon(0) |{\bf q}|R_s } + {\bf q}^2}{q_0^2 + {\bf q}^2}.
\label{eq:Resta-Precon}
\end{equation}

It is instructive to compare this preconditioner with Kerker preconditioner. These two preconditioners are plotted as the function of $\bf q$ in Fig.~\ref{fig:Precon}. For Kerker preconditioner, the $k_{TF}$ is chosen to be 1 \AA$^{-1}$. For the Resta preconditioner, the static dielectric constant is chosen to be $6.5$. For many semiconductors and insulators, this value falls into the range of $5 \sim 15$. The $q_0$ is chosen to be 1 \AA$^{-1}$ and the screening length is 4 \AA, accordingly. These values are about the typical inputs for all binary semiconductors studied in \cite{Mahadevan1992}.
\begin{figure}[t]
 \includegraphics{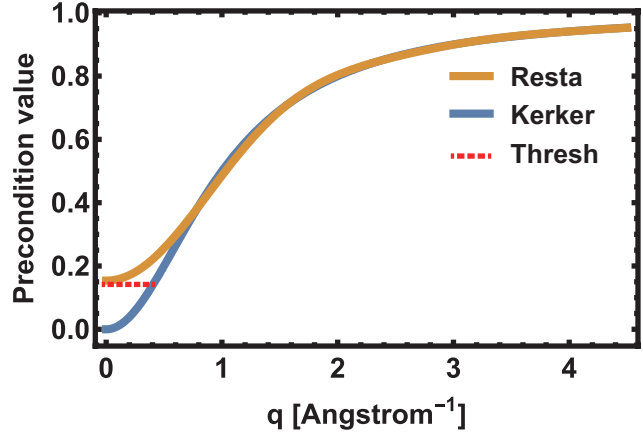}
  \caption{\label{fig:Precon} (color online) The preconditioning models as a function of reciprocal vector ${\bf q}$. A threshold parameter can be added to the Kerker preconditioner to simulate the small ${\bf q}$ behavior of the insulating systems.}
\end{figure}
From Fig.~\ref{fig:Precon}, we would like to point out the following points:\\
(i) The essential difference between the Kerker preconditioner and
Resta preconditioner lies at the long wavelength limit. Kerker preconditioner,
as we have discussed previously, goes to zero quadratically while the Resta
preconditioner goes to $1/\varepsilon(0)$. This represents the incomplete
screening in the insulating systems due to a lack of conducting electrons. If
a nominal insulating system contains the defect states which are partially filled, Resta preconditioner becomes less effective.\\
(ii) A threshold can be added to the Kerker preconditioner to mimic the behavior of the Resta preconditioner at the small ${\bf q}$'s, as shown by the dashed line in Fig.~\ref{fig:Precon}. Now the modified Kerker preconditioner takes the form:
\begin{equation}
H_{1}^{TF'}({\bf q}) = \max(a_0, \frac{{\bf q}^2}{{\bf q}^2 + k_{TF}^2}).
\label{eq:revised-Kerker}
\end{equation}
This action restores the long-range screening behavior of the insulating systems. A more practical variant includes the linear mixing parameter $\alpha$ together with the preconditioner:
\begin{equation}
H_{1}^{TF''}({\bf q}) = \max(a_0, \alpha \frac{{\bf q}^2}{{\bf q}^2 + k_{TF}^2}).
\label{eq:revised-Kerker-p}
\end{equation}
Accordingly, the optimal $a_0$ should be around $\alpha/\varepsilon(0)$. This modification extends the applicability of Kerker preconditioner to insulating systems.

\section{IV. Numerical examples}
We perform the convergence tests using the in-house
code CESSP \cite{WangHan2016,Gao2017} under the infrastructure of JASMIN
\cite{Mo2010}. The exchange and correlation energy is described by the
generalized gradient approximation proposed by Perdew, Burke, and Ernzerhof
\cite{Perdew96}. Electron-ion interactions are treated with projector
augmented wave potentials \cite{Kresse99}. The first 5 steps of the calculation take the block
variant \cite{Liu1978} of the Davison algorithm with no charge mixing. The
following steps take the RMM-DIIS method \cite{KresseFurthmuller1996} with
Pulay charge mixing. The mixing parameter $\alpha$ is set to 0.4 in all calculations with different preconditioners. The convergence criterion for self-consistent field loop is $1 \times 10^{-6} $ eV, which is sufficient for most slab calculations. The slab models include at least 20 \AA \ of vacuum layer to exclude the spurious interaction under the periodic boundary condition. Along the \emph{x} and \emph{y} directions the cells are kept as primitive cell in all calculations.

\subsection{A. Au slab: the metallic system}
The first system is \{111\} Au slab. We construct three Au slab
systems with 14, 33 and 54 layers of Au \{111\} planes, corresponding to a cell
parameter of 50, 110 and 150 {\AA} along the direction normal to Au \{111\} surface, respectively. A 12$\times$12$\times$1 k-point grid is used to sample the Brillouin zone. The cutoff energy is 350 eV. We take the modified Kerker preconditioner with $a_0=0$, referred as "original" Kerker preconditioner. The $k_{TF}$ has been set to 1  \AA$^{-1}$.

When the Kerker preconditioner is applied, the number of SCF iteration steps are 27, 32 and 31 for 14, 33 and 54 layer Au slabs, respectively. This number is weakly dependent on the size of the system, which implies that the charge sloshing has been well suppressed. As stated in the previous section, the Thomas-Fermi model and the Kerker preconditioner catch the asymptotic behavior of the dielectric function at long wavelength limit, even though the free electron gas model is not a good approximation for Au and most metallic systems. In addition, if we try Pulay mixing scheme with the preconditioning matrix $\alpha I$, the SCF convergence cannot be reached within 120 steps for any slabs.

\subsection{B. MoS$_{2}$: the layered insulating system}
Secondly, we study the convergence of the layered insulating system: MoS$_{2}$. Two slab systems with 10 and 20 layers of MoS$_{2}$, corresponding to a cell parameter of 80 {\AA}
and 160 {\AA}, have been constructed. The MoS$_{2}$ layers are stacked in the same fashion as those in the bulk MoS$_{2}$. A 6$\times$6$\times$1 k-point grid has been used to sample the Brillouin zone. The cutoff energy is 450 eV. Both the Kerker and the Resta preconditioners have been tested on the MoS$_{2}$ slab systems.

For the Resta preconditioner, we need the static dielectric constant and the screening length as input parameters. We find the reported average static elastic constant of MoS$_{2}$ depending on the number of MoS$_{2}$ layers from literature \cite{Wirtz2011,Lambrecht2012,Meunier2014,Ghosh2016}. However, they all fall into the range of 5 $\sim$ 15. In the calculations we use three static dielectric constants 5, 10 and 15 to construct the Resta preconditioner. The screening length $R_s$ has been set to 3.5 which is close to the lattice constants. The $q_0$ in the Resta model is then calculated by Eq.~\eqref{eq:Resta-Constr}.

We compare it with the original and the modified Kerker preconditioners. In these two preconditioners, the Thomas-Fermi vector $k_{TF}$ has been set to 1 \AA$^{-1}$. In the modified Kerker preconditioner, we have chosen the threshold parameters $a_0$ to be $0.4/5, 0.4/10$ and $0.4/15$ according to Eq.~\eqref{eq:revised-Kerker-p}.

\begin{table}[h]
\caption{\label{tab:MoS2}%
The number of convergence steps in the MoS$_{2}$ slab systems.
}
\begin{ruledtabular}
\begin{tabular}{ccc}
\textrm{Precondition model} & \textrm{10 layer MoS$_{2}$} & \textrm{20 layer MoS$_{2}$} \\
 \colrule
 Original Kerker & 38  & 52 \\
 Resta($\varepsilon(0)$ = 5) & 25  & 26  \\
 Resta($\varepsilon(0)$ = 10) & 30  & 31  \\
 Resta($\varepsilon(0)$ = 15) & 32  & 32  \\
 Modified Kerker ($a_0=0.4/5$) & 27 & 27 \\
 Modified Kerker ($a_0=0.4/10$) & 28 & 32 \\
 Modified Kerker ($a_0=0.4/15$) & 28 & 32 \\
\end{tabular}
\end{ruledtabular}
\end{table}

It can be seen that the Resta model and the modified Kerker model converge faster than the original Kerker scheme. This is due to a correct description of the incomplete screening effect for insulators at small ${\bf q}$. In addition, using 5, 10 or 15 for the static dielectric constant gives similar results, indicating that the convergence speed is less sensitive to this parameter.

\subsection{C. Si slab: the insulating system containing defect states}
Even though Resta preconditioner seems to be more appropriate for insulating
systems, we show that this might not be the case for the "nominal" insulating
systems containing defect states that cross the Fermi level. To illustrate this,
we construct a 96 layer Si slab with the \{111\} orientation and a cell parameter of 175 \AA \ along \emph{z} direction. Both the top and the bottom Si surfaces have one dangling bond due to the creation of the surface. A 6$\times$6$\times$1 k-point grid has been used to sample the Brillouin zone. The cutoff energy is 320 eV. The dielectric constant of bulk Si is about 12. The screening length $R_s$ is set to 4.2 \AA \ and the $q_0$ is set to 1.1 \AA$^{-1}$ according to Resta's work \cite{Resta1977}. We compare the convergence speed between three preconditioning models in Table \ref{tab:Si}.
\begin{table}[h]
\caption{\label{tab:Si}%
The number of convergence steps in the original and H-passivated Si slab systems.
}
\begin{ruledtabular}
\begin{tabular}{ccc}
\textrm{Preconditioning model} & \textrm{Bare Si slab}  &  \textrm{H-passivated Si slab} \\
 \colrule
 Original Kerker & 40 & 46 \\
 Modified Kerker ($a_0$ = 0.4/12)  & 52 & 29 \\
 Resta & 47 & 30 \\
\end{tabular}
\end{ruledtabular}
\end{table}

The original Kerker preconditioner offers the fastest convergence compared with the other two, which goes against with the conclusion from previous section. After careful inspection, we conclude that it is the surface states of the Si slab that deviate the system from a "perfect" insulating system. The density of states (DOS) of the slab and the partial charge density of the states near Fermi level have been plotted in Fig.~\ref{fig:SiComb}(a).
\begin{figure}[h]
 \includegraphics{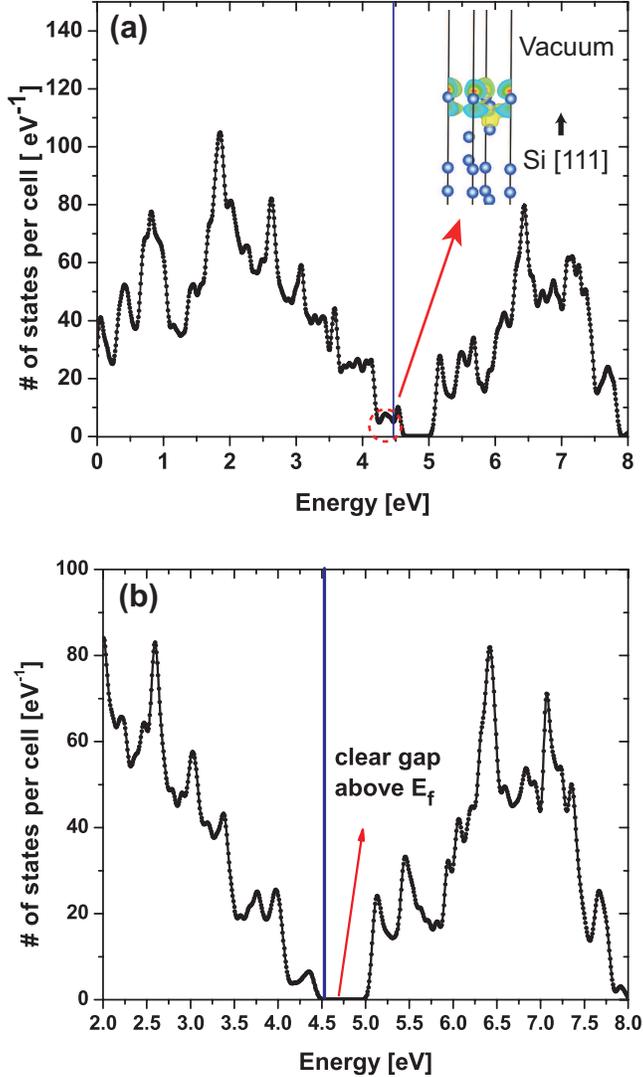}
  \caption{\label{fig:SiComb} (color online) (a) The DOS of the 96-atom Si slab. The vertical blue line indicates the Fermi level. The states right below the Fermi level are the surface states, as shown by the partial charge density plot. The bottom surface is identical to the top surface thus only one is shown. (b) The DOS of the Si slab with H passivation (96 layers of Si with 2 H passivation layers on the top and the bottom Si surfaces). The added H layers remove the surface states: a clear band gap now occurs right above Fermi level.}
\end{figure}
The creation of the surface introduces defect states right at
the Fermi level. The presence of these states drives the system away from a
"perfect" insulating system, since the number of states right at the Fermi level is
finite. This essential difference makes the preconditioning models designed
for insulators much less effective. In our previous case, on the other hand, we do not introduce surface states when creating MoS$_{2}$ slabs from the bulk due to its intrinsic layered geometry.

To further prove our idea, we passivate the Si surface states by covering the surface with H atoms. Now the system contains 96 layer of Si with 2 extra layer of H covering the top and the bottom Si surfaces. The convergence speed versus different preconditioning models is also shown in the Table \ref{tab:Si}. Now the trend is consistent with that of MoS$_{2}$: the modified Kerker (29 steps) and Resta models (30 steps) is faster than the original Kerker model (46 steps). The extra H layers have passivated the dangling bonds on the Si surfaces thus removed the surface states. This is clearly shown by the DOS of the H passivated Si slab in Fig.~\ref{fig:SiComb}(b).

Given this, the modified Kerker preconditioner and Resta preconditioner are better suited for the "prefect" insulating systems. However, introducing defect states that cross the Fermi level would render these preconditioners much less effective.

\subsection{D. Au-MoS$_{2}$: the metal-insulator hybrid system}
Now we discuss Au-MoS$_{2}$ contact systems which combine multiple layers of Au in \{111\} orientation and multiple layers of MoS$_{2}$. The Au layers and MoS$_{2}$ layers are in close contact, separated by a distance of the covalent bond length. Such structural models have been studied using DFT calculations to understand the surface, interface and contact properties of Au-MoS$_{2}$ epitaxial systems \cite{Tomanek,ZhouCSE2016,Kang2014}. The Au-MoS$_{2}$ contact configuration is similar to the \{111\} orientation configuration in Ref.~\cite{ZhouCSE2016}. To investigate the performance of the preconditioners, we have constructed slab systems that are much thicker.

Au-MoS$_{2}$ slabs with different proportion of Au and MoS$_{2}$ have been created. These slab systems share same cell parameter and nearly same slab thickness and total number of atoms. The total length of the cell is 160 \AA \ with $\sim$ 25 \AA \ vacuum layer and $\sim$ 135 \AA \ Au-MoS$_{2}$ slab. The total number of atoms is about 65 in all slabs. We use the following notation to label slabs with different proportion of Au and MoS$_{2}$: X Au + Y MoS$_{2}$ means we have X layers of Au and Y layers of MoS$_{2}$ in the slab. The total number of atoms is X + 3Y since each MoS$_{2}$ layer contains 3 layers of atoms. A 6$\times$6$\times$1 k-point grid is used to sample the Brillouin zone. The cutoff energy is 450 eV. Here we only consider the Kerker preconditioner and its modified version. Resta model is no longer appropriate to describe Au-MoS$_{2}$ hybrid systems. As shown later, it is possible to achieve fast convergence of such highly inhomogeneous systems under the Kerker preconditioning model, even though the model is originally based on the homogeneous electron gas.

There are two parameters in the modified Kerker preconditioner to adjust: $a_0$ and $k_{TF}$, according to Eqs.~\eqref{eq:original-kerker} and \eqref{eq:revised-Kerker}. Since these two parameters are describing the effectiveness of the screening from different perspective, we will be only adjusting one parameter while keeping the other fixed.

Firstly we keep $k_{TF}=1$ \AA$^{-1}$ and estimate the lower and the upper bounds of the threshold parameter $a_0$. We consider two extremes when the system is solely Au or solely MoS$_2$. In the former case, $a_0$ can be chosen as any value below $\alpha\frac{q_{min}^2}{q_{min}^2+k^2_{TF}}$, where $q_{min}$ is the smallest reciprocal vector along \emph{z} direction. Thus the lower bound is about $6 \times 10^{-4}$ by setting $q_{min}=2\pi/L$ and $L=160$ \AA. In the latter case, $a_0$ can be set as 0.04 with the static dielectric constant being set to 10. Nevertheless, it is difficult to determine the optimized value of $a_0$ for systems with varying Au proportion. Reducing $a_0$ to the original Kerker preconditioner could lead to convergence in all cases, though generally it is not the most efficient choice. Consequently the lower bound can be regarded as a safe choice.

Secondly we keep $a_0=0$ and adjust $k_{TF}$. According to Eq.~\eqref{eq:kTF_n}, $k_{TF}$ is related to the number of electrons participating in screening. Starting from the system of solely Au, increasing the proportion of MoS$_{2}$ part means a reduction in the number of "free" electrons for screening, and results in a decreasing value of $k_{TF}$. For solely Au slab, the value of $k_{TF}$ is 1 \AA$^{-1}$. Replacing a fraction, say $1-f$, of the Au slab with MoS$_{2}$, reduces the number of free electrons from $n$ to $fn$ since the MoS$_{2}$ makes no contributions to conducting electrons. Therefore, the $k_{TF}$ for the hybrid system can be estimated as $k_{TF} \times f^{1/6}$, according to Eq.~\eqref{eq:kTF_n}. For example, in the 27 Au + 12 MoS$_{2}$ system, there are total 63 atoms in the system. Thus the Au fraction is 27/63 and the corresponding $k_{TF}$ is given by $(\frac{27}{63})^{1/6} \sim 0.87$. The convergence tests results of adjusting $k_{TF}$ in the above way are listed in Table~\ref{tab:AuMoBm}, together with those from the original Kerker scheme.
\begin{table}[ht]
\caption{\label{tab:AuMoBm}%
The convergence steps in Au-MoS$_{2}$ hybrid systems.
}
\begin{ruledtabular}
\begin{tabular}{ccc}
\textrm{Au-MoS${_2}$ systems} & Original Kerker & Adjusting $k_{TF}$ \footnote{The estimated values of $k_{TF}$ are shown in the parenthesis.}  \\
 \colrule
 43 Au + 6 MoS$_{2}$ & 34 & 33 (0.94) \\
 39 Au + 8 MoS$_{2}$ & 41 & 39 (0.92) \\
 27 Au + 12 MoS$_{2}$ & 49 & 29 (0.87) \\
 16 Au + 16 MoS$_{2}$  & 43 & 41 (0.8) \\
 11 Au + 18 MoS$_{2}$  & 32 & 31 (0.74) \\
 7 Au + 19 MoS$_{2}$  & 48 & 31 (0.70) \\
 5 Au + 20 MoS$_{2}$  & 47& 34 (0.65) \\
 3 Au + 21 MoS$_{2}$  & 37 & 45 (0.60) \\
 1 Au + 22 MoS$_{2}$  & 37 & 22 (0.50) \\
\end{tabular}
\end{ruledtabular}
\end{table}

From the table, adjusting $k_{TF}$ offers at least comparable and most likely faster convergence compared with original Kerker scheme. In the low Au proportion slabs (1, 3, 5, and 7 layers), our scheme saves about 22\% of overall SCF steps compared with the original Kerker scheme. With increasing Au proportion, these two schemes exhibit similar performance as the slabs now behave more closely to bulk metals. In our opinion, adjusting $k_{TF}$ would be potentially useful in some kind of high-throughput calculations.

In the 3 Au + 21 MoS$_{2}$ system, the estimated $k_{TF}$ does not improve the convergence compared to the original Kerker scheme. Since we ignore the contribution of interface states to the "effective" free electrons, a slight increase of $k_{TF}$ could improve the preconditioner. Indeed, when changing $k_{TF}$ from 0.6 to 0.65, the convergence steps become 32, faster than the 37 steps from original Kerker scheme. Similarly, in the 16 Au + 16 MoS$_{2}$ system, changing $k_{TF}$ from 0.8 to 0.85 reduces the convergence steps from 41 to 31. We further note that applying this parameterization scheme requires \emph{a priori} knowledge of the system. The parameterization scheme for situations without sufficient \emph{a priori} knowledge will be discussed later.

\section{V. Further discussions}
We would like to address few important points and present some further discussions in this section:

\textbf{1.The key feature of the modified Kerker preconditioner}

The Thomas-Fermi screening model and the Kerker preconditioner is rooted in
the homogeneous electron gas model. It is shown by numerical examples that
with some simple but physically meaningful modifications, the Kerker
preconditioner can be applied to a wide range of materials. All test systems
are no way near the free electron gas system, such as the insulating systems
and the metal-insulator contact systems. Then what is the merit in the
modified Kerker preconditioner? We believe that a good description of the
long-range screening behavior is key to fast convergence. While in the
modified Kerker scheme, it is possible to capture the essence of long-range
screening: the original Kerker scheme naturally suppresses quadratic divergence
as ${\bf q} \rightarrow 0$ in the metallic system; the incomplete screening
effect in the insulating systems is represented by the threshold parameter
$a_0$; in the metal-insulator contact system, the long-range screening effect
is characterized by the parameter $k_{TF}$ which represents the number of
effective electrons participating in screening. The numerical examples
indeed prove the effectiveness of the modified Kerker preconditioner:
converging a large-scale slab system (with more than 60 layer and more than 150 {\AA}
long in cell parameter) to relatively high accuracy in about 30 SCF steps is significant for practical applications. Also, in many Au-MoS$_{2}$ cases, the modified Kerker scheme (when $k_{TF}$ is reasonably set) speed up 40\% compared to the original Kerker scheme.

\textbf{2. \emph{A posteriori} indicator and self-adaptive configuration}

In practice, it may be difficult to appropriately parameterize the preconditioner when lacking  \emph{a priori} knowledge. However, we can still monitor if the charge sloshing occurs during the SCF iterations by an \emph{a posteriori} indicator. Theoretically, the charge sloshing is indicated by the spectrum of the matrix $JP$ or its inverse $(JP)^{-1}$ from Eq.~\eqref{eq:error-propagation}. Practically, we could compute the eigenvalues of $P^{-1}H_m$ instead of $(JP)^{-1}$. The preconditioning matrix $P$ is a symmetric positive definite with the Kerker scheme or our modified version. The matrix $H_m$ updated by Eq.~\eqref{eq:Pulay-update2} satisfies the constraint condition in Eq.~\eqref{eq:opt-3}:
\begin{equation}
    H_mY_{m-1} = -S_{m-1}.
    \label{eq:secant}
\end{equation}
Assuming the vectors ${\bf x}_{m-i}$ all sufficiently close to the solution of Eq.~\eqref{eq:fixed-point-problem}, we have
\begin{equation}
    J^{-1}Y_{m-1} \approx -S_{m-1}.
    \label{eq:approx-secant}
\end{equation}
Comparing Eq.~\eqref{eq:secant} with Eq.~\eqref{eq:approx-secant}, we find that $H_m$ is almost the best approximation of the inverse Jacobian $J^{-1}$ in the subspace spanned by $Y_{m-1}$. Consequently the eigenvalues of $P^{-1}H_m$ are calculated in this subspace by solving the following generalized eigenvalue problem:
\begin{equation}
    Y_{m-1}^TH_mY_{m-1}{\bf u}_i = \lambda_i Y_{m-1}^TPY_{m-1}{\bf u}_i.
    \label{eq:eigenpro}
\end{equation}
In implementation, we shift Eq.~\eqref{eq:eigenpro} as like
\begin{equation}
    Y_{m-1}^T(H_m-P)Y_{m-1}{\bf u}_i = (\lambda_i-1) Y_{m-1}^TPY_{m-1}{\bf u}_i.
    \label{eq:eigenpro-shift}
\end{equation}
Note that it takes little computational overhead to solve Eq.~\eqref{eq:eigenpro-shift} since $(H_m-P)Y_{m-1}=-(S_{m-1}+H_1Y_{m-1})$ has been calculated in Pulay's update and the dimension of Eq.~\eqref{eq:eigenpro-shift} is generally less than 50 in our code.

As far as we know, Kresse and Furthm{\"u}ller \cite{KresseFurthmuller1996} propose similar formula as Eqs.~\eqref{eq:eigenpro} and \eqref{eq:eigenpro-shift} to investigate the spectrum range for insulators and open-shell transition metals of different sizes. Instead of examining the range of spectrum, we extract the minimal module of the eigenvalues from Eq.~\eqref{eq:eigenpro}. In principle, charge sloshing directly causes a divergence trend in the eigenvalues of the dielectric matrix $J$. If the charge sloshing is not suppressed by the preconditioner $P$, it will give rise to some large eigenvalues in the spectrum of $JP$.  Then the least modulus of the spectrum of $(JP)^{-1}$ would be small. As discussed above, we approximate $J^{-1}$ by $H_m$ in the subspace spanned by $Y_{m-1}$. So the least modulus of the eigenvalues of Eq.~\eqref{eq:eigenpro} or \eqref{eq:eigenpro-shift} can be chosen as the \emph{a posteriori} indicator to show whether the preconditioner has suppressed charge sloshing or not. We believe this quantity is more directly related to the occurrence of charge sloshing than the range of spectrum. Based on our experience, if charge sloshing occurs in the practical calculation, the \emph{a posteriori} indicator would be generally below 0.1. With the \emph{a posteriori} indicator, we further realize the self-adaptive configuration of the SCF iteration.

We demonstrate the practical use of the \emph{a posteriori} indicator in calculations of the 5 Au + 20 MoS$_{2}$ system. The energy convergence and the \emph{a posteriori} indicator during the SCF calculation are plotted in Fig.~\ref{fig:5Au}.
\begin{figure}[ht]
 \includegraphics{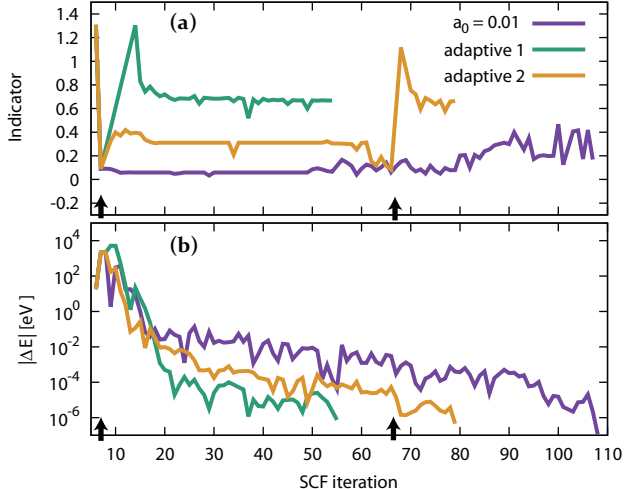}
  \caption{\label{fig:5Au} (color online) The \emph{a posteriori} indicator (a) and the energy convergence (b) versus SCF iteration step under $a_0 = 0.01$,
  and two self-adaptive schemes are shown. The arrows on the axis
  indicate that we launch self-adaptive configuration at the 8th step and 67th step.}
\end{figure}
Before doing the calculation, one would guess the 5 Au + 20 MoS$_{2}$ system is similar
to the solely MoS$_{2}$ system since the major part is MoS$_{2}$. Then, we
begin with $a_0=0.01$ (note $a_0$ is the threshold parameter in the preconditioner, which should be distinguished from the indicator). It is shown in
Fig.~\ref{fig:5Au} that the SCF convergence with $a_0=0.01$ is slow. Meanwhile, the
\emph{a posteriori} indicator is lying below 0.1 at most of the first 80 SCF
steps, which implies an incomplete suppression of the charge sloshing.

We design two self-adaptive schemes when the \emph{a posteriori} indicator falls below threshold 0.1. One is to stop the current task and restart the calculation with original Kerker preconditioner (corresponding to
"adaptive 1" in Fig.~\ref{fig:5Au}). After the self-adaptive configuration
at the 8th step, the \emph{a posteriori} indicator is kept around 0.7 and it saves about a half of the SCF iteration steps compared with the $a_0=0.01$ run. The other way is to clear the subspace of $Y_{m-1}$ (the information from previous steps) and continue the SCF iteration with original Kerker preconditioner (corresponding to "adaptive 2" in Fig.~\ref{fig:5Au}). In this case, two reconfigurations occur at the 8th step
and 67th step to keep the \emph{a posteriori} indicator above 0.1. The SCF convergence is finally reached around 80 steps, still saves about 30 steps compared with the $a_0=0.01$ run. The former scheme seems more efficient than the latter for now. Further studies on the self-adaptive configuration in the SCF calculations will be presented in our follow-up research.

\textbf{3. Integrated preconditioning scheme}

Here we present the complete strategy of the modified Kerker preconditioning in Table \ref{tab:strategy}.
\begin{table}[ht]
\caption{\label{tab:strategy}%
The general parameterization strategy for large dimension systems.
}
\begin{ruledtabular}
\begin{tabular}{cccc}
 No. & System & Long-range screening properties & Preconditioner \\
  \colrule
 1 & Metal & $1/q^{2}$ & Original Kerker \\
 2 & Insulator & $\varepsilon(0)$ & Modified Kerker or Resta \\
 3 & Metal + insulator & Effective "free" $e^{-}$ & $k_{TF}=f^{1/6}$ \\
 4 & Unknown & Unknown & $a_0 \sim 0.04$ \& \emph{a posteriori} indicator \\
\end{tabular}
\end{ruledtabular}
\end{table}
We add some remarks on this integrated strategy: \\
(i) The threshold parameter $a_0$ for insulators can be set to 0.04 as default. The static dielectric constant for most insulators falls into the range between 5 $\sim$ 15, and the SCF convergence is not that
sensitive to static dielectric constant. Therefore, we expect that the default setting could help to achieve fast convergence in many insulating systems. \\
(ii) We discuss the metal-insulator contact systems where the metal region and insulator
region are spatially separated and well defined. But a fine mixing of them on the scale of atomic distance does not fall into this category. Such situation should be treated as
a system lack of \emph{a priori} knowledge unless further information can be founded.\\
(iii) Strategy 4 basically presumes the system is insulator. Then the SCF iteration is monitored by the \emph{a posteriori} indicator. If the charge sloshing occurs, the preconditioning scheme could be self-adaptively reconfigured. For now we suggest using "adaptive 1", which discards the current calculation and restart with original Kerker preconditioner. However, we expect to develop more efficient self-adaptive schemes in the future studies.\\

\section{VI. Conclusions}
We have proposed the modified Kerker scheme to improve the SCF convergence for metallic,
insulating and metal-insulator hybrid systems. The modifications contain
following key points: the original Kerker preconditioner is suited for typical metallic systems; the threshold parameter $a_0$ characterizes the screening
behavior of insulators at long wavelength limit thus helps to accommodate the
insulating systems; the $k_{TF}$ represents the effective number of conducting
electrons and its approximation can be used to improve
the SCF convergence for metal-insulator hybrid systems; the \emph{a
posteriori} indicator guides the inexperienced users away from staggering into
the charge sloshing. These modifications cost negligible extra computation overhead and exhibit the flexibility of working in either \emph{a priori} or self-adaptive way, which would be favored
by the high-throughput first-principles calculations.

\section{Acknowledgements}
This work was partially supported by Science Challenge Project under Grant
JCKY2016212A502, the National Key Research and Development Program of China under Grant
2016YFB0201204, the National Science Foundation of China under Grants
91730302 and 11501039, the China Postdoctoral Science Foundation under Grant 2017M610820.

\section{Appendix}
Now we prove that Eq.~\eqref{eq:Broyden-update} is the solution to the constraint
optimization problem Eq.~\eqref{eq:opt-3}. It is prerequisite to prove the following lemma.
\begin{lemma}
    \label{lemma:mini-norm}
    Let $X\in\mathbb{C}^{m\times n}, A\in\mathbb{C}^{n\times p},
    B\in\mathbb{C}^{m\times p}$, and assume that $A$ has full column rank.
    Denote the Moore-Penrose pseudoinverse of $A$ by
    $A^{\dagger}$ with $A=\left(A^HA\right)^{-1}A^H$. If $XA=B$ is satisfiable, and the
    matrix $Z\equiv BA^{\dagger}$, then it holds that
    \begin{equation}
        \|Z\|_F \leqslant \|X\|_F
        \label{eq:mini-norm}
    \end{equation}
\end{lemma}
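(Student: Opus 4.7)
The plan is to show $Z = BA^{\dagger}$ is itself a feasible solution to the linear matrix equation $YA = B$, and then argue that any other feasible $X$ decomposes orthogonally (in the Frobenius inner product) as $X = Z + (X-Z)$, with the residual contributing only nonnegative extra norm. This is the standard ``minimum Frobenius norm solution'' argument specialized to the setting where $A$ has full column rank, so it suffices to translate it into the notation of the paper.

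First I would verify feasibility: using the hypothesis that $A$ has full column rank, so that $A^{H}A$ is invertible and $A^{\dagger}A = (A^{H}A)^{-1}A^{H}A = I_{p}$, I compute
\begin{equation*}
    ZA = BA^{\dagger}A = B,
\end{equation*}
so the constraint is met. Next, for any $X$ with $XA = B$, set $E = X - Z$; then $EA = XA - ZA = B - B = 0$, i.e.\ the columns of $A$ lie in the right null space of $E$.

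The key computation is the orthogonality of $Z$ and $E$ in the Frobenius inner product $\langle U, V\rangle_{F} = \mathrm{tr}(U^{H}V)$. I would write
\begin{equation*}
    \langle Z, E\rangle_{F} = \mathrm{tr}\bigl(Z^{H}E\bigr) = \mathrm{tr}\bigl((A^{\dagger})^{H}B^{H}E\bigr) = \mathrm{tr}\bigl(A(A^{H}A)^{-1}B^{H}E\bigr),
\end{equation*}
and then use the cyclic property of the trace to bring $A$ to the right, yielding a factor $EA = 0$ inside the trace and hence $\langle Z, E\rangle_{F} = 0$. Pythagoras in the Frobenius norm then gives
\begin{equation*}
    \|X\|_{F}^{2} = \|Z\|_{F}^{2} + \|E\|_{F}^{2} \geqslant \|Z\|_{F}^{2},
\end{equation*}
which is the desired inequality.

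The only mildly delicate point is to track the Hermitian conjugate carefully when moving $A^{\dagger} = (A^{H}A)^{-1}A^{H}$ through the trace, and to ensure the cyclic rearrangement that produces the factor $EA$ is legitimate (all intermediate products are of compatible sizes because $A$ has full column rank, so $(A^{H}A)^{-1}$ exists). Once that book-keeping is in place the proof is essentially a two-line orthogonal decomposition plus a Pythagorean identity.
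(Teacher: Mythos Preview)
Your proposal is correct and follows essentially the same route as the paper: both arguments establish that $X-Z$ and $Z$ are orthogonal in the Frobenius inner product and then invoke the Pythagorean identity. The only cosmetic difference is that the paper packages the orthogonality step through the projector $Q=AA^{\dagger}$ (using $ZQ=Z$ and $XQ=Z$), whereas you reach the same vanishing trace via the cyclic property and $EA=0$; these are equivalent manipulations.
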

\begin{proof}
    Let $Q\equiv AA^{\dagger}$. Then it follows that $Q^H=Q$ and $ZQ=Z$.
    Thus we have
    \begin{eqnarray}
        \left(X-Z, Z\right)_F &\equiv& \mathrm{tr}\left[(X-Z)Z^H\right] \\
        &=& \mathrm{tr}\left[(X-Z)(ZQ)^H\right] \nonumber \\
        &=& \mathrm{tr}\left[(X-Z)Q^HZ^H\right] \nonumber \\
        &=& \mathrm{tr}\left[(X-Z)QZ^H\right] \nonumber \\
        &=& \mathrm{tr}\left[\left(XAA^{\dagger}-ZQ\right)Z^H\right] \nonumber \\
        &=& \mathrm{tr}\left[\left(BA^{\dagger}-Z\right)Z^H\right] \nonumber \\
        &=& 0 \nonumber
        \label{eq:Frobenius-inner-product}
    \end{eqnarray}
    Note that Eq.~\eqref{eq:Frobenius-inner-product} is an inner product
    corresponding to the Frobenius norm $\|\cdot\|_F$. Hence
    \begin{eqnarray*}
        \|X\|_F^2 &=& \|X-Z\|_F^2 + 2\left(X-Z, Z\right)_F + \|Z\|_F^2 \\
        &=& \|X-Z\|_F^2 + \|Z\|_F^2 \\
        &\geqslant& \|Z\|_F^2
        \label{}
    \end{eqnarray*}
    with equality if and only if $X=Z$.
\end{proof}

\vspace{5pt}
Let $H'\equiv H-H_{m-1}$. Thus the optimization problem \eqref{eq:opt-3} can
be replaced by its equivalent one
\begin{equation}
    \left\{
    \begin{array}{rl}
        \min_{H'} & \|H'\|_F^2 \\
        \textrm{s.t.} & H'Y_{m-1} = -\left(S_{m-1}+H_{m-1}Y_{m-1}\right),
    \end{array}
    \right.%
    \label{eq:opt-4}
\end{equation}
It follows from Lemma \ref{lemma:mini-norm} that the solution to the problem
Eq.~\eqref{eq:opt-4} is
\begin{equation}
    H'= -\left(S_{m-1}+H_{m-1}Y_{m-1}\right)\left(Y_{m-1}^TY_{m-1}\right)^{-1}Y_{m-1}^T.
    \label{eq:opt-4-solu}
\end{equation}
Therefore the solution to the problem \eqref{eq:opt-3} is
\begin{equation}
    H = H_{m-1} -\left(S_{m-1}+H_{m-1}Y_{m-1}\right)\left(Y_{m-1}^TY_{m-1}\right)^{-1}Y_{m-1}^T.
    \label{eq:opt-3-solu}
\end{equation}
% body of paper here - Use proper section commands
% References should be done using the \cite, \ref, and \label commands

% If you have acknowledgments, this puts in the proper section head.

% Create the reference section using BibTeX:
\bibliography{Zhouetal}

\end{document}